\newcommand{\spg}[2]{sp(#1,#2)}
\newcommand{\ipb}[2]{\langle #1, #2 \rangle}
\newcommand{\vre}[1]{\phi(#1)}
\newcommand{\set}[1]{\{#1\}}
\newcommand{\defemph}[1]{\textbf{#1}}
\newcommand{\KG}[2]{KG_{#1,#2}}
\newcommand{\lc}[1]{\tilde{#1}}
\def\onev{\mathbf{1}}
\def\RR{\mathbb{R}}
\def\CC{\mathbb{C}}
\def\FF{\mathbb{F}}
\def\ZZ{\mathbb{Z}}
\def\sf{\sigma}
\def\al{\alpha}
\def\ox{\otimes}
\def\mc{\mathcal}
\def\tp{^{\mathsf{T}}} 
\DeclareRobustCommand\idop{\leavevmode\hbox{\small1\normalsize\kern-.33em1}}
\def\1{\idop}
\def\w{w}
\def\de{\mathbf{E}}
\DeclareMathOperator{\Tr}{Tr}
\DeclareMathOperator{\rank}{rank}
\theoremstyle{definition}
\newtheorem{theorem}{Theorem}
\newtheorem{lemma}[theorem]{Lemma}
\newtheorem{definition}[theorem]{Definition}
\newtheorem{remark}[theorem]{Remark}
\begin{document}

\title{Entanglement can increase asymptotic rates of zero-error classical communication over classical channels}
\author{Debbie Leung, Laura Mancinska, William Matthews, Maris Ozols, Aidan Roy\\Institute of Quantum Computing, University of Waterloo, \\200 University Ave. West, Waterloo, ON N2L 3G1, Canada}

\maketitle

\begin{abstract}
	It is known that the number of different classical messages which can be communicated with a \emph{single use} of a classical channel with zero probability of decoding error can sometimes be increased by using entanglement shared between sender and receiver. It has been an open question to determine whether entanglement can ever increase the zero-error communication rates achievable in the limit of many channel uses. In this paper we show, by explicit examples, that entanglement can indeed increase asymptotic zero-error capacity, even to the extent that it is equal to the \emph{normal} capacity of the channel. Interestingly, our examples are based on the exceptional simple root systems $E_7$ and $E_8$.
\end{abstract}

\section{Introduction}

A \emph{classical channel} $\mc{N}$ which is discrete and memoryless is fully described by its conditional probability distribution $\mc{N}(y|x)$ of producing output $y$ for a given input $x$. The channel obtained by allowing one use of a channel $\mc{N}_1$ and one use of $\mc{N}_2$ is written as $\mc{N}_1\ox\mc{N}_2$, reflecting the fact that its conditional probability matrix is the tensor (or Kronecker) product of those of the two constituent channels. Similarly, $\mc{N}^{\ox n}$ denotes $n$ uses of $\mc{N}$.

\begin{definition}
	Let $M_0(\mc{N})$ denote the maximum number of different messages which can be sent with a single use of $\mc{N}$ with zero probability of a decoding error. The \defemph{zero-error capacity} of $\mc{N}$ is
	\begin{equation}
		C_0(\mc{N}) := \lim_{n\to\infty} \frac{1}{n} \log M_0(\mc{N}^{\ox n}).
	\end{equation}
\end{definition}

Two input symbols $x_1, x_2$ of a channel $\mc{N}$ are said to be \emph{confusable} if $\mc{N}(y|x_1) > 0$ and $\mc{N}(y|x_2) > 0$ for some output symbol $y$. The \emph{confusability graph} of a channel $\mc{N}$ is a graph $G(\mc{N})$, whose vertices correspond to different input symbols of $\mc{N}$ and two vertices are joined by an edge if the corresponding symbols are confusable. The confusability graph of $\mc{N}_1\ox\mc{N}_2$ is determined by those of $\mc{N}_1$ and $\mc{N}_2$ as follows: $G(\mc{N}_1\ox\mc{N}_2) = G(\mc{N}_1)\boxtimes G(\mc{N}_2)$, where ``$\boxtimes$'' denotes the \emph{strong graph product}.

\begin{definition}[Strong graph product]
	In general, the \defemph{strong product} of graphs $G_1, \ldots, G_n$ is a graph $G_1\boxtimes \cdots \boxtimes G_n$, whose vertices are the $n$-tuples $V(G_1) \times \cdots \times V(G_n)$ and distinct vertices $(a_1, \ldots, a_n)$ and $(b_1, \ldots, b_n)$ are joined by an edge if they are entry-wise confusable, i.e., for each $j \in \set{1,\ldots,n}$ either $a_j b_j \in E(G_j)$ or $a_j = b_j$. Likewise, we define the \defemph{strong power} of graph $G$ by $G^{\boxtimes 1} := G$ and $G^{\boxtimes n} := G\boxtimes G^{\boxtimes (n-1)}$.
\end{definition}

An \emph{independent set} of a graph is a subset of its vertices with no edges between them. The \emph{independence number} $\alpha(G)$ of a graph $G$ is the maximum size of an independent set of $G$. As Shannon observed~\cite{Shannon}, $M_0$ and $C_0$ depend only on the confusability graph of the channel: $M_0(\mc{N}) = \alpha(G(\mc{N}))$ and $C_0(\mc{N}) = \log \Theta(G(\mc{N}))$
where
\begin{equation}
	\Theta(G) := \lim_{n\to\infty} \sqrt[n]{\alpha(G^{\boxtimes n})}
\end{equation}
is known as the \emph{Shannon capacity} of the graph $G$. Clearly, $\Theta(G) \geq \alpha(G)$, since $G^{\boxtimes n}$ has an independent set of size $\alpha(G)^n$. However, in general $\Theta(G)$ can be larger than $\alpha(G)$. The simplest example is the $5$-cycle $C_5$ for which $\alpha(C_5) = 2$ but $\Theta(C_5) = \sqrt{5}$.

Computing the independence number of a graph is NP-hard, but conceptually simple. However, no algorithm is known to determine $\Theta(G)$ in general, although there is a celebrated upper bound due to Lov\'asz \cite{Lovasz}. He defined an efficiently computable quantity $\vartheta(G)$ called the \emph{Lov\'asz number} of $G$ which satisfies $\vartheta(G) \geq \alpha(G)$ and $\vartheta(G_1\boxtimes G_2) = \vartheta(G_1)\vartheta(G_2)$. Because of these properties we also have $\vartheta(G) \geq \Theta(G)$.

\begin{definition}
	Let $M_0^{\mathrm{E}}(\mc{N})$ denote the number of different messages which can be sent with a single use of $\mc{N}$ with zero probability of a decoding error, when both parties share an arbitrary finite-dimensional entangled state on which each can perform arbitrary local measurements. The \defemph{entanglement-assisted zero-error capacity} of $\mc{N}$ is
	\begin{equation}
		C_0^{\mathrm{E}}(\mc{N}) := \lim_{n\to\infty} \frac{1}{n} \log M_0^{\mathrm{E}}(\mc{N}^{\ox n}).
	\end{equation}
\end{definition}

As in the unassisted case, the quantities $M_0^{\mathrm{E}}$ and $C_0^{\mathrm{E}}$ also depend \emph{only} on the confusability graph of the channel~\cite{CLMW}. For this reason, we will talk about the assisted and unassisted zero-error capacities of \emph{graphs} as well as of channels.

In~\cite{CLMW} it was shown that graphs $G$ exist with $M_0^{\mathrm{E}}(G) > M_0(G)$.  Shortly afterwards it was shown~\cite{Beigi,DSW} that the Lov\'asz bound also applies to the entanglement-assisted quantities, so $M_0^{\mathrm{E}}(G) \leq \vartheta(G)$ and hence $C_0^{\mathrm{E}}(G) \leq \log \vartheta(G)$.

Whether graphs with $C_0^{\mathrm{E}}(G) > C_0(G)$ exist, that is, whether entanglement can ever offer an advantage in terms of the \emph{rates} achievable in the \emph{large block length limit} was left as an open question. Clearly, the Lov\'asz bound cannot be used to prove such a separation. Fortunately, there is another bound due to Haemers which is sometimes better than the Lov\'asz bound \cite{Haemers,Haemers2}.

\begin{theorem}[Haemers~\cite{Haemers,Haemers2}]\label{Haemer}
	For $u,v \in V(G)$ let $M_{uv}$ be a matrix with entries in any field $K$. We say that $M$ \defemph{fits} $G$ if $M_{uu} \neq 0$ and $M_{uv} = 0$ whenever there is no edge between $u$ and $v$. Then $\Theta(G) \leq R(G) := \min \set{ \rank(M) : M \text{ fits } G }$. In particular, $C_0(G) \leq \log R(G)$.
\end{theorem}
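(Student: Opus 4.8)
The plan is to isolate two properties of the quantity $R(G)$ and then feed them into the definition of the Shannon capacity. The first property is that $R(G) \geq \alpha(G)$. To see this, take any matrix $M$ that fits $G$ and any independent set $S \subseteq V(G)$ with $\abs{S} = \alpha(G)$, and look at the principal submatrix $M[S,S]$. Since no two distinct vertices of $S$ are adjacent, every off-diagonal entry of $M[S,S]$ vanishes by the fitting condition, while the diagonal entries $M_{uu}$ are all nonzero; hence $M[S,S]$ is a nonsingular diagonal matrix, so $\rank(M) \geq \rank(M[S,S]) = \abs{S} = \alpha(G)$. Minimising over all fitting $M$ gives $R(G) \geq \alpha(G)$. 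Note this argument is valid over any field $K$.

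The second property is submultiplicativity under the strong product, $R(G_1 \boxtimes G_2) \leq R(G_1)\,R(G_2)$. Here I would choose matrices $M_1$ fitting $G_1$ and $M_2$ fitting $G_2$ of minimum rank and verify that the Kronecker product $M_1 \ox M_2$, whose rows and columns are indexed by $V(G_1)\times V(G_2)$, fits $G_1 \boxtimes G_2$. The diagonal entry at $(u_1,u_2)$ equals $(M_1)_{u_1 u_1}(M_2)_{u_2 u_2} \neq 0$. For distinct, non-adjacent vertices $(u_1,u_2)$ and $(v_1,v_2)$ of the strong product, negating the adjacency rule of $\boxtimes$ produces a coordinate $j$ with $u_j \neq v_j$ and $u_j v_j \notin E(G_j)$; for that $j$ the fitting condition gives $(M_j)_{u_j v_j} = 0$, so the corresponding entry of $M_1 \ox M_2$ vanishes. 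Since $\rank(M_1 \ox M_2) = \rank(M_1)\rank(M_2)$, this proves the claim, and iterating it gives $R(G^{\boxtimes n}) \leq R(G)^n$.

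Combining the two properties, for every $n$ we get
\[
	\alpha(G^{\boxtimes n}) \leq R(G^{\boxtimes n}) \leq R(G)^n,
\]
hence $\sqrt[n]{\alpha(G^{\boxtimes n})} \leq R(G)$ for all $n$, and taking $n \to \infty$ yields $\Theta(G) \leq R(G)$. The bound $C_0(G) \leq \log R(G)$ is then immediate from $C_0(G) = \log \Theta(G)$.

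I do not anticipate a genuine obstacle: the proof is structurally the same as the argument that $\vartheta$ or $\alpha$-type parameters bound $\Theta$. The only steps that need care are the bookkeeping in the second property — correctly negating the strong-product adjacency condition to pin down the coordinate $j$ that annihilates the off-diagonal Kronecker entry — and, if one wants a fully self-contained treatment, invoking Fekete's lemma (via supermultiplicativity of $\alpha$ under $\boxtimes$) to guarantee that the limit defining $\Theta(G)$ exists in the first place.
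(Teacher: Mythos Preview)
Your proof is correct and follows essentially the same approach as the paper: first show $R(G)\geq\alpha(G)$ by restricting a fitting matrix to an independent set to obtain a nonsingular diagonal submatrix, then show $R(G_1\boxtimes G_2)\leq R(G_1)R(G_2)$ via the Kronecker product, and combine. You have simply written out in more detail the verification that $M_1\otimes M_2$ fits the strong product and the final limit step, which the paper leaves implicit.
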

\begin{proof}
	Let $S$ be a maximal independent set in $G$. If $M$ fits $G$, then $M_{uv} = 0$ for all $u \neq v \in S$ while the diagonal entries are non-zero. Hence, $M$ has full rank on a subspace of dimension $|S|$ and thus $\rank(M) \geq |S| = \alpha(G)$. As this is true for any $M$ that fits $G$, we get $R(G) \geq \alpha(G)$.

	Next, note that if $M_1$ fits $G_1$ and $M_2$ fits $G_2$ then $M_1 \ox M_2$ fits $G_1 \boxtimes G_2$, and $\rank(A \ox B) = \rank(A) \rank(B)$. Hence, $R(G_1 \boxtimes G_2) \leq R(G_1)R(G_2)$, which implies the desired result.
\end{proof}

The next section shows how Haemers bound applies to a particular graph to determine its unassisted zero-error capacity, and then provides an explicit entanglement-assisted protocol that achieves a higher rate. This shows that entanglement assistance can indeed increase the asymptotic zero-error rate, thus giving an affirmative answer to the previously open question.

The entanglement-assisted protocol is based on the fact that the graph in question can be constructed from the \emph{root system} \cite{roots} $E_7$, so in Section 3 we investigate constructions based on other irreducible root systems. Most notably we show that a construction based on $E_8$ provides another example with a larger gap in the capacities. In Section 4 we discuss open problems.

\section{The zero-error capacities of the symplectic graph $\spg{6}{\FF_2}$} \label{sect:Main}


\begin{definition}[Symplectic space]
	A \defemph{non-degenerate symplectic space} $(V,S)$ is a vector space $V$ (over a field $K$) equipped with a \defemph{non-degenerate symplectic form}, i.e., a bilinear map $S: V \times V \to K$ which is
\begin{itemize}
	\item skew-symmetric: $S(u,v) = -S(v,u)$ for all $u,v \in V$, and
	\item non-degenerate: if $S(u,v) = 0$ for all $v \in V$, then $u = 0$.
\end{itemize}
	If $K$ has characteristic 2, we also require that $S(u,u) = 0$ for all $u \in V$ (for other fields this is implied by the anti-symmetry property). On a $2m$-dimensional space, the \defemph{canonical symplectic form} is
	\begin{equation}
		\sf(u,v) := u^T \left( \begin{array}{cc} 0 & \1_m \\ -\1_m & 0 \end{array} \right)v.
	\end{equation}
	where $\1_m$ is the $m \times m$ identity matrix.
Any non-degenerate symplectic space with finite dimensional vector space $V$ is isomorphic to the canonical symplectic space $(V,\sf)$.
\end{definition}

\begin{definition}[Symplectic graph]
	Let $K$ be a finite field and let $m$ be a natural number. The vertices of the \defemph{symplectic graph} $\spg{2m}{K}$ are the points of the projective space $\mathbb{P} K^{2m}$ and there is an edge between $u, v \in \mathbb{P} K^{2m}$ if $\sf(u, v) = 0$. In the case where $K = \FF_2$, the points of the projective space are simply the $2^{2m} - 1$ non-zero elements of $\FF_2^{2m}$.
\end{definition}
\begin{remark}
	The symplectic graph $\spg{2m}{\FF_2}$ is isomorphic to the graph whose vertices are all the $m$-fold tensor products of Pauli matrices except for the identity, i.e., $\set{\1,X,Y,Z}^{\ox m} \setminus \set{1^{\ox m}}$, and which has edges between commuting matrices.
\end{remark}
The next two subsections prove that for channels with the confusability graph $\spg{6}{\FF_2}$, the entanglement-assisted zero-error capacity is larger than the unassisted capacity. More precisely,
\begin{theorem}\label{mainthm}
	$C_0\big(\spg{6}{\FF_2}\big) = \log 7$ while $C_0^{\mathrm{E}}\big(\spg{6}{\FF_2}\big) = \log 9$.
\end{theorem}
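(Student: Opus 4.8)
The claim splits into four inequalities, and I would organize the argument around the chain
\[
\alpha\big(\spg{6}{\FF_2}\big) \le \Theta\big(\spg{6}{\FF_2}\big) \le R\big(\spg{6}{\FF_2}\big) \le 7
\]
for the unassisted side, together with an explicit entanglement-assisted protocol on $\spg{6}{\FF_2}$ using $9$ messages, whose rate is then boosted to $\log 9$ by tensor powers. So the four things to establish are: (i) $\alpha\big(\spg{6}{\FF_2}\big) = 7$ (or at least $\ge 7$); (ii) $R\big(\spg{6}{\FF_2}\big) \le 7$ over some field, which via Theorem~\ref{Haemer} gives $C_0 \le \log 7$; (iii) $M_0^{\mathrm{E}}\big(\spg{6}{\FF_2}\big) \ge 9$, giving $C_0^{\mathrm{E}} \ge \log 9$ since $M_0^{\mathrm{E}}$ is supermultiplicative under $\boxtimes$; and (iv) $C_0^{\mathrm{E}}\big(\spg{6}{\FF_2}\big) \le \log 9$, for which the natural route is the Lov\'asz bound $C_0^{\mathrm{E}} \le \log \vartheta$, so it suffices to compute $\vartheta\big(\spg{6}{\FF_2}\big) = 9$.

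For (ii), the key observation is that $\spg{6}{\FF_2}$ is a \emph{self-complementary-ish} strongly regular graph coming from a symplectic form, and that its adjacency structure is governed by the bilinear form $\sf$ over $\FF_2$. I would look for a matrix $M$ that fits the graph by taking $M_{uv} = \sf(u,v)$ adjusted so that the diagonal is nonzero; since $\sf(u,u) = 0$ in characteristic $2$, a cleaner choice is to work over a larger field and take $M_{uv}$ built from the symplectic form plus a rank-one correction, or to use the fact that the complement's adjacency matrix $\bar A$ satisfies $\bar A + \1$ has low rank. Concretely, $\spg{6}{\FF_2}$ has $63$ vertices, and I expect that $J - A - \1$ (or a signed variant over $\RR$ or $\FF_2$) has rank $7$: the $63\times 6$ matrix whose rows are the vectors $u \in \FF_2^6 \setminus \{0\}$, call it $U$, gives $A_{uv} = u\tp \Omega v$ where $\Omega$ is the symplectic matrix, and $A + \1$ (diagonal nonzero, off-diagonal zero exactly when $u\tp\Omega v \neq 0$, i.e., when $u,v$ \emph{non-}adjacent — so this fits the complement). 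To fit $\spg{6}{\FF_2}$ itself I instead want a rank-$7$ matrix vanishing on non-edges; the complement of the symplectic graph is again essentially symplectic-type, so a rank-$\le 7$ fitting matrix should come from a $7$-dimensional representation — e.g., via the $7$-dimensional module for $\mathrm{Sp}_6(\FF_2) \cong$ the Weyl group of $E_7$, tying into the $E_7$ connection advertised in the introduction. This is where I expect to lean on the root-system description: the $126$ roots of $E_7$ in $63$ antipodal pairs give a $7$-dimensional realization, and inner products of roots reduce mod $2$ to $\sf$, yielding a rank-$7$ matrix over $\RR$ (or a suitable field) that fits the graph.

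Step (i), $\alpha = 7$, I would get from the same $E_7$ picture: a set of $7$ mutually orthogonal (hence non-adjacent, since $\sf$ of the reductions is the inner product mod $2$... one must be careful here) roots, or more safely by exhibiting $7$ pairwise non-commuting Paulis in the $\set{\1,X,Y,Z}^{\ox 3}$ description — an anticommuting set of size $7$ among the $3$-qubit Paulis, which is a standard fact (it is the max size of such a set, related to $2m+1$ for $m=3$). The matching upper bound $\alpha \le 7$ then follows from $R \le 7$ in step (ii) since $\alpha \le R$. For (iv), I would cite/compute $\vartheta\big(\spg{6}{\FF_2}\big)$: by vertex-transitivity, $\vartheta(G)\vartheta(\bar G) = |V| = 63$, and if the graph is "nice" (e.g., a rank-$3$ graph with the right eigenvalues) the Lov\'asz number equals the ratio bound $|V|\,\frac{-\lambda_{\min}}{d - \lambda_{\min}}$; plugging in the strongly-regular parameters of $\spg{6}{\FF_2}$ should give exactly $9$. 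The main obstacle, and the genuinely new content, is step (iii): constructing the explicit entanglement-assisted protocol sending $9$ messages with zero error. I would build it from the $E_7$ root system — partitioning (a subset of) the roots into $9$ classes and using a maximally entangled state in dimension equal to the relevant parameter (I expect dimension $7$ or $63/7=9$), with projective measurements indexed by the input symbols, such that outputs shared by confusable inputs still let the receiver pin down the message class. Verifying the zero-error (orthogonality) conditions for this measurement assignment, and checking $9$ is actually attainable rather than just consistent with $\vartheta = 9$, is the crux; everything else is bookkeeping with known bounds.
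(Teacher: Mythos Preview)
Your overall decomposition into (i)--(iv) matches the paper, and your first idea for (ii) --- take $M_{uv} = 1 + \sigma(u,v)$ over $\FF_2$, i.e., the symplectic form plus a rank-one correction --- is exactly Peeters' construction used in the paper; it has rank $\le 7$ and fits $\spg{6}{\FF_2}$. But your alternative route for (ii), extracting the Haemers matrix from the $E_7$ root Gram matrix, would fail: distinct non-antipodal roots $\alpha,\beta$ have $\langle\alpha,\beta\rangle\in\{-1,0,1\}$ with $\langle\alpha,\beta\rangle\equiv\sigma(\kappa(\alpha),\kappa(\beta))\pmod 2$, so the Gram matrix vanishes precisely on \emph{edges} of $\spg{6}{\FF_2}$ (where $\sigma=0$) and hence fits the \emph{complement}. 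This is the same sign confusion you flag in (i), where you write ``mutually orthogonal (hence non-adjacent \ldots\ one must be careful here)'': in fact orthogonal roots map to \emph{adjacent} vertices, so seven orthogonal roots give a clique, not an independent set. Your fallback to seven pairwise anticommuting $3$-qubit Paulis is the correct fix and matches the paper.

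The substantive gap is in (iii). The paper's protocol rests on a clean general lemma you have not isolated: if $G$ admits a $d$-dimensional orthonormal representation \emph{and} its vertex set partitions into $k$ cliques each of size $d$, then $C_0^{\mathrm{E}}(G)=\log k$ (each clique becomes an orthonormal basis of $\CC^d$; Alice measures her half of a rank-$d$ maximally entangled state in the basis indexed by her message and sends the resulting vertex). For $\spg{6}{\FF_2}$ two \emph{independent} ingredients feed into this: the $E_7$ roots supply the $7$-dimensional orthonormal representation, while the partition into $9$ cliques of size $7$ is a \emph{symplectic spread} of $\FF_2^6$ --- a piece of finite geometry having nothing to do with $E_7$. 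Your plan conflates the two (``partitioning the roots into $9$ classes'') and is accordingly unsure whether the entangled-state dimension is $7$ or $9$; without separating them you cannot write down Alice's nine measurements or verify zero error. As a bonus, the clique cover already forces $\vartheta(G)\le 9$ (delete edges to leave $9$ disjoint $7$-cliques; $\vartheta$ can only increase), so (iv) comes for free and your eigenvalue computation --- which does correctly give $\vartheta=9$ via the ratio bound --- is unnecessary.
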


\subsection{Capacity in the unassisted case}

The fact that $C_0\big(\spg{6}{\FF_2}\big) = \log 7$ is a special case of a result in~\cite{Peeters} which we prove explicitly here.

\begin{theorem}[Peeters~\cite{Peeters}]\label{sc}
	$C_0\big(\spg{2m}{\FF_2}\big) = \log (2m + 1)$.
\end{theorem}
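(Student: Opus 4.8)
The plan is to prove $C_0\big(\spg{2m}{\FF_2}\big) = \log(2m+1)$ by sandwiching the Shannon capacity $\Theta\big(\spg{2m}{\FF_2}\big)$ between $2m+1$ from below and $2m+1$ from above, using an explicit independent set for the lower bound and the Haemers bound (Theorem~\ref{Haemer}) for the upper bound, so that in fact $\alpha = \Theta = 2m+1$.

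For the lower bound I would exhibit an independent set of size $2m+1$, i.e.\ a set of $2m+1$ nonzero vectors in $\FF_2^{2m}$ that are pairwise non-orthogonal under the symplectic form $\sigma$ (so pairwise \emph{non}-adjacent in $\spg{2m}{\FF_2}$, recalling that edges join vectors with $\sigma(u,v)=0$). A natural candidate is to take the $2m$ standard basis vectors $e_1,\ldots,e_{2m}$ together with one more vector such as $e_1+\cdots+e_{2m}$, or more symmetrically the rows of a suitable circulant-type matrix; one checks $\sigma(e_i,e_j) = \pm 1 \neq 0$ precisely for the ``conjugate'' pairs and $0$ otherwise, so the raw basis is not independent and the right choice needs a little care — one wants a family of vectors whose pairwise symplectic products are all nonzero. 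This is exactly the combinatorial heart of the construction, and I expect it is where a clean description (probably drawn from~\cite{Peeters}) is needed; once such a set $I$ of size $2m+1$ is found, $\alpha\big(\spg{2m}{\FF_2}\big) \geq 2m+1$ and hence $\Theta \geq 2m+1$.

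For the upper bound I would apply Haemers' theorem with the field $K = \FF_2$ itself: define the matrix $M$ indexed by the vertices (nonzero vectors $u,v \in \FF_2^{2m}$) by $M_{uv} := \sigma(u,v) + 1 \pmod 2$, i.e.\ $M_{uv} = 1$ when $\sigma(u,v) = 0$ (which includes the diagonal, since $\sigma(u,u)=0$ in characteristic~$2$) and $M_{uv} = 0$ when $\sigma(u,v) \neq 0$. Then $M_{uu} = 1 \neq 0$, and whenever there is no edge between $u$ and $v$ we have $\sigma(u,v) \neq 0$ so $M_{uv} = 0$; thus $M$ fits $\spg{2m}{\FF_2}$. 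Writing $M = J + N$ over $\FF_2$, where $J$ is the all-ones matrix and $N_{uv} = \sigma(u,v)$, we have $\rank(J) = 1$ and $N$ is (a principal submatrix of, indeed equal to) the Gram matrix of the $2^{2m}-1$ nonzero vectors with respect to the symplectic form, which factors as $N = A^{\mathsf{T}} J_\sigma A$ where the columns of $A$ are those vectors and $J_\sigma$ is the $2m \times 2m$ canonical symplectic matrix; hence $\rank(N) \leq 2m$. Therefore $\rank(M) \leq \rank(J) + \rank(N) \leq 2m+1$, so by Theorem~\ref{Haemer}, $\Theta\big(\spg{2m}{\FF_2}\big) \leq R\big(\spg{2m}{\FF_2}\big) \leq 2m+1$.

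Combining the two bounds gives $\Theta\big(\spg{2m}{\FF_2}\big) = 2m+1$ and therefore $C_0\big(\spg{2m}{\FF_2}\big) = \log(2m+1)$. The main obstacle is the explicit lower-bound construction: I need to pin down a concrete set of $2m+1$ pairwise non-orthogonal vectors and verify the pairwise condition, which, while elementary, requires choosing the family correctly (the naive standard basis fails). Everything on the upper-bound side is routine linear algebra over $\FF_2$ once the ``$J + $ Gram matrix'' decomposition is spotted. It is worth remarking that this argument simultaneously shows $\alpha = 2m+1$, so for these graphs entanglement-free coding already gives nothing beyond the single-shot value — which is precisely what sets up the contrast with the entanglement-assisted bound proved in the next subsection.
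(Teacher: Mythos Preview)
Your upper bound is correct and is essentially the paper's argument in a different dress: the paper observes that $M_{uv}=1+\sigma(u,v)$ is the Gram matrix $\ipb{\onev+T(u)}{\onev+T(v)}$ of vectors in $\FF_2^{2m+1}$ (where $T$ is an isomorphism from $(\FF_2^{2m},\sigma)$ to the even-weight subspace $U_m\subset\FF_2^{2m+1}$ with the standard inner product), whereas you use rank subadditivity on $M=J+N$. Both establish $\rank(M)\le 2m+1$ for the same matrix $M$; your version is perhaps the more direct linear-algebra route, while the paper's Gram-matrix formulation has the advantage that the same isomorphism $T$ immediately yields the lower bound.

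The genuine gap is exactly where you flag it: you do not produce the independent set. The paper's construction is clean once you have the isomorphism above. In $\FF_2^{2m+1}$ set $f_i:=e_i+\onev$ for $1\le i\le 2m+1$; then each $f_i$ has even weight (so $f_i\in U_m$) and $\ipb{f_i}{f_j}=\ipb{\onev}{\onev}+\ipb{e_i}{e_j}=1+\delta_{ij}=1-\delta_{ij}$ over $\FF_2$. Pulling back via $T^{-1}$ gives $2m+1$ nonzero vectors in $\FF_2^{2m}$ with $\sigma(T^{-1}f_i,T^{-1}f_j)=1$ for all $i\neq j$, i.e.\ a pairwise non-adjacent set in $\spg{2m}{\FF_2}$. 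Your instinct that the standard basis of $\FF_2^{2m}$ fails was right; the trick is to pass to the $(2m{+}1)$-dimensional ambient space where the ``complement of a basis vector'' family does the job symmetrically.
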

\begin{proof}
	For the upper bound we construct a matrix over $\FF_2$ which fits $\spg{2m}{\FF_2}$ and which has rank $2m+1$ and use Haemers' bound (see Theorem \ref{Haemer}). Let
	\begin{equation}
		U_m := \{ v \in \FF_2^{2m+1} : \ipb{v}{v} = 0 \}
	\end{equation}
be the $2m$-dimensional subspace of $\FF_2^{2m+1}$ that consists of vectors which have an even number of entries equal to one. The restriction of the standard inner product $\ipb{\cdot}{\cdot}$ on $\FF_2^{2m+1}$ to the subspace $U_m$ is a non-degenerate symplectic form, so there is an isomorphism $T: (\FF_2^{2m}, \sf) \to (U_m, \ipb{\cdot}{\cdot})$ such that
	\begin{equation}
		\forall u,v \in \FF_2^{2m}: \sf(u,v) = \ipb{T(u)}{T(v)}.
	\end{equation}
	Let $\onev$ be the all-ones vector in $\FF_2^{2m+1}$ (note that $\ipb{\onev}{v} = 0$ for all $v \in U_m$). For all $u,v \in \FF_2^{2m}$ let
	\begin{align*}
		M_{uv}
		&:= \ipb{\onev+T(u)}{\onev+T(v)}\\
		&= \ipb{\onev}{\onev}
		 + \ipb{\onev}{T(v)}
		 + \ipb{T(u)}{\onev}
		 + \ipb{T(u)}{T(v)}\\
		&= 1+ \sf(u,v).
	\end{align*}
	Since $\sf(u,v) = 1$ if and only if $u$ and $v$ are not joined by an edge in $\spg{2m}{\FF_2}$, matrix $M$ fits $\spg{2m}{\FF_2}$. Since it is the Gram matrix of a set of $(2m + 1)$-dimensional vectors (i.e. the entry at $i,j$ is the inner product of the vector $i$ and vector $j$ for some ordering of the set of vectors), its rank is at most $2m+1$. Therefore, by Haemers' bound, $C_0\big(\spg{2m}{\FF_2}\big) \leq \log (2m + 1)$.
	
	For the matching lower bound, let $e_i$ for $1 \leq i \leq 2m+1$ be the standard basis of $\FF_2^{2m+1}$, and let $f_i := e_i + \onev$. Then  $\ipb{f_i}{f_j} = 1-\delta_{ij}$ so $f_i \in U_m$ and	$\sf\big(T^{-1}(f_i),T^{-1}(f_j)\big) = 1-\delta_{ij}$. Therefore $\{T^{-1}(f_i): i=1,\ldots,2m+1\}$ is an independent set of size $2m+1$ in $\spg{2m}{\FF_2}$, so $\alpha\big(\spg{2m}{\FF_2}\big) \geq 2m + 1$.

	Hence, $C_0\big(\spg{2m}{\FF_2}\big) = \log \alpha\big(\spg{2m}{\FF_2}\big) = \log (2m + 1)$ and the upper bound on the zero-error capacity is attained by a code of block length one.
\end{proof}

\subsection{Entanglement-assisted capacity} \label{sect:Entanglement}

In this section we will establish the entanglement-assisted capacity of $\spg{6}{\FF_2}$. Our main tool is Theorem~\ref{thm:Protocol} together with some combinatorial results.
\begin{definition}
A $d$-dimensional \defemph{orthonormal representation} of graph $G = (V,E)$ is a function $\phi: V \rightarrow \CC^d$ that assigns unit vectors to the vertices of $G$ such that for each edge $uv \in E$ vectors $\vre{u}$ and $\vre{v}$ are orthogonal.
\end{definition}
\noindent The following theorem appeared in \cite{CLMW2} but for completeness we include the proof here.
\begin{theorem}[\cite{CLMW2}]
If graph $G$ has an orthonormal representation in $\CC^d$ and its vertices can be partitioned into $k$ disjoint cliques each of size $d$ then $C_0^{\mathrm{E}}(G) = \log k$.
\label{thm:Protocol}
\end{theorem}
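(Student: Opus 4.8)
The plan is to prove the two bounds $C_0^{\mathrm{E}}(G)\le\log k$ and $C_0^{\mathrm{E}}(G)\ge\log k$ separately. The upper bound is a short chain of known facts, while all the content — and the only place the orthonormal representation is used — goes into an explicit single-shot protocol for the lower bound.

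For the upper bound, note that a partition of $V(G)$ into $k$ cliques is in particular a cover of $V(G)$ by $k$ cliques, so the clique cover number $\bar{\chi}(G)$ is at most $k$. A basic property of the Lov\'asz number (Lov\'asz~\cite{Lovasz}) is the sandwich inequality $\vartheta(G)\le\bar{\chi}(G)$, hence $\vartheta(G)\le k$. Since, as recalled in the introduction, $C_0^{\mathrm{E}}(G)\le\log\vartheta(G)$ (this uses $M_0^{\mathrm{E}}(G^{\ox n})\le\vartheta(G^{\ox n})=\vartheta(G)^n$), we conclude $C_0^{\mathrm{E}}(G)\le\log k$.

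For the lower bound it suffices to exhibit a single use of the channel for which $M_0^{\mathrm{E}}(G)\ge k$, as this already gives $C_0^{\mathrm{E}}(G)\ge\log k$ (in fact the rate $\log k$ is then attained at block length one). Write the clique partition as $Q_1,\ldots,Q_k$. Since each $Q_i$ is a clique, the orthonormal representation $\phi$ maps it to $d$ pairwise orthogonal unit vectors, i.e.\ to an orthonormal basis of $\CC^d$; moreover $\phi$ is injective on any clique, because two vertices sent to orthogonal vectors cannot be sent to the same (unit) vector. Thus each vertex $v\in Q_i$ is uniquely determined by the pair $(i,\phi(v))$. Let Alice and Bob share the maximally entangled state $\tfrac{1}{\sqrt d}\sum_{j=1}^d|j\rangle_A|j\rangle_B$. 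To send message $i$, Alice measures system $A$ in the basis $\set{\overline{\phi(v)}:v\in Q_i}$ of complex conjugates; by the transpose (``ricochet'') identity for maximally entangled states, an outcome corresponding to some $v\in Q_i$ leaves Bob's system $B$ in the state $|\phi(v)\rangle$, and Alice then transmits the input symbol $v$ through $\mc{N}$. On Bob's side, let $y$ be the received output and $K_y:=\set{x:\mc{N}(y|x)>0}$. Any two elements of $K_y$ are confusable, so $K_y$ is a clique in $G$; hence $\set{\phi(x):x\in K_y}$ is an orthonormal set and $\phi$ is injective on $K_y$. Bob extends this set to an orthonormal basis of $\CC^d$ and measures $B$ in it. Alice's transmitted vertex $v$ lies in $K_y$ and Bob's state is $|\phi(v)\rangle$, so the outcome is $\phi(v)$ with certainty; by injectivity of $\phi$ on $K_y$ this pins down $v$, hence the unique clique $Q_i\ni v$, so Bob outputs $i$ correctly. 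Therefore $M_0^{\mathrm{E}}(G)\ge k$, completing the proof.

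I expect the only real subtlety to be the correctness of the decoding: one must verify both that the post-measurement state on Bob's side is exactly $|\phi(v)\rangle$ (the transpose trick, up to a harmless complex conjugation of $\phi$, which again yields an orthonormal representation) and that $K_y$ being a clique is precisely what forces $\set{\phi(x):x\in K_y}$ to be an orthonormal set so that Bob's measurement never errs. Everything else is bookkeeping plus the already-cited bounds $C_0^{\mathrm{E}}\le\log\vartheta$ and $\vartheta\le\bar{\chi}$.
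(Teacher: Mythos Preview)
Your proof is correct and follows essentially the same approach as the paper: the lower bound is exactly the single-shot entanglement-assisted protocol the paper describes (you have spelled out in text what the paper relegates to Figure~\ref{fig:part-prot}), and the upper bound is the same Lov\'asz-number argument, phrased via the sandwich inequality $\vartheta(G)\le\bar\chi(G)$ rather than via monotonicity of $\vartheta$ under edge removal down to the disjoint union of $k$ $d$-cliques---these two formulations are equivalent.
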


\begin{proof}
	Figure~\ref{fig:part-prot} describes a protocol that uses a rank-$d$ maximally entangled state to send one of $k$ messages with zero error by a single use of the channel, proving that $C_0^{\mathrm{E}}(G) \geq k$.
	Removing edges from $G$ cannot decrease the Lov\'{a}sz number and in this way we can obtain the graph which is the strong product of the empty graph on $k$ vertices with the clique of size $d$ (i.e. the disjoint union of $k$ $d$-cliques). This graph has Lov\'{a}sz number $k$ so $\vartheta(G) \leq k$. Since $C_0^{\mathrm{E}} (G) \leq \log \vartheta(G)$ \cite{Beigi, DSW}, this provides the matching upper bound.
\end{proof}

\begin{figure*}
	\includegraphics[width=\textwidth]{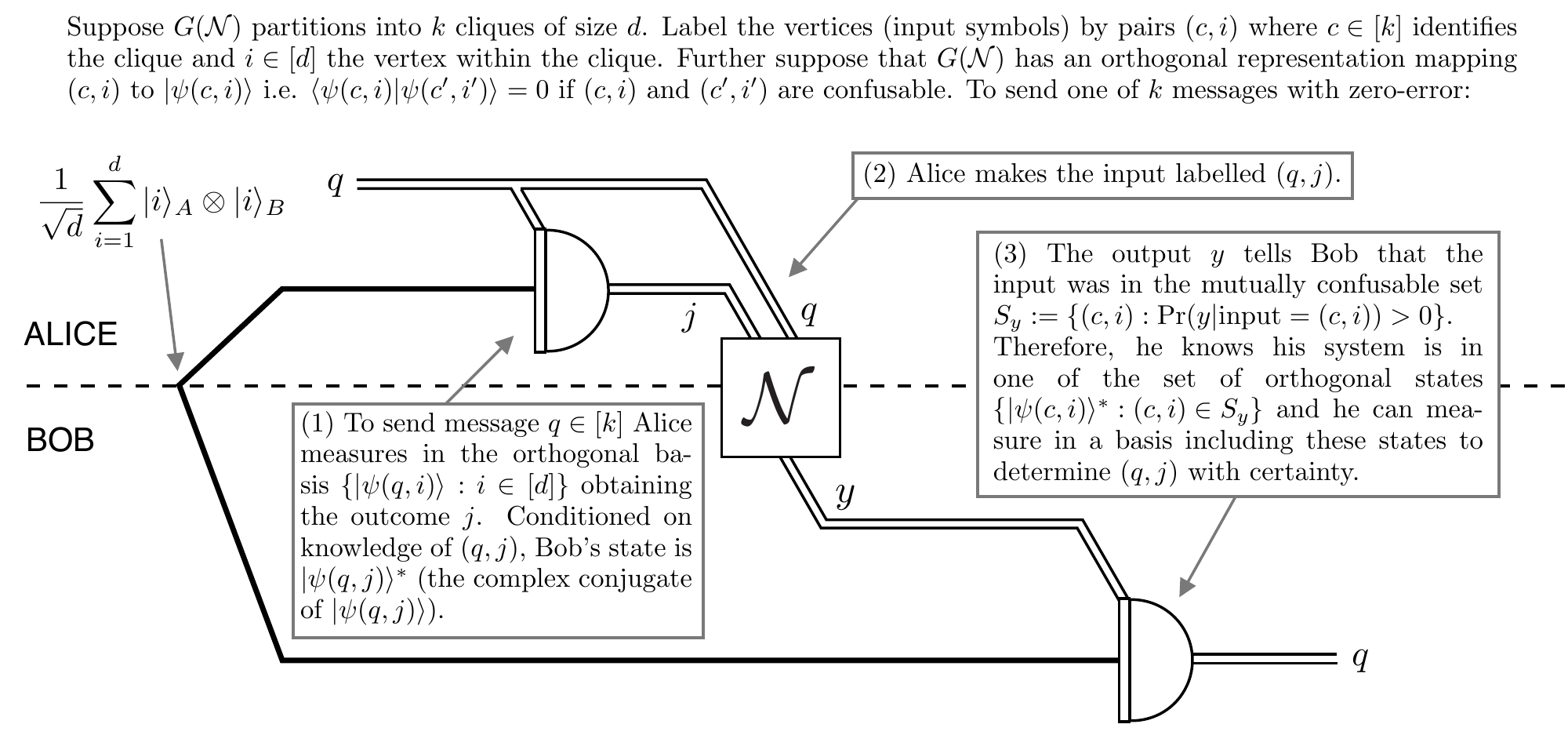}
	\caption{If the confusability graph of $\mc{N}$ can partitioned into $k$ cliques of size $d$ then $C_0(\mc{N}) \leq k$ by the Lov\'{a}sz bound. If it also has a $d$-dimensional orthonormal representation then this rate can be \emph{achieved} by the entanglement-assisted zero-error code (of block length one) described in this figure.}\label{fig:part-prot}
\end{figure*}

\begin{lemma}
The $2^{2m} - 1$ vertices of $\spg{2m}{\FF_2}$ can be partitioned into $2^m + 1$ cliques of size $2^m - 1$.
\label{lem:Cliques}
\end{lemma}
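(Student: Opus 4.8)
Let me think about this. We want to partition the $2^{2m}-1$ nonzero vectors of $\FF_2^{2m}$ into $2^m+1$ cliques each of size $2^m-1$. A clique in $\spg{2m}{\FF_2}$ is a set of pairwise orthogonal (under $\sf$) projective points, i.e. a set of nonzero vectors spanning a **totally isotropic** subspace... wait, not quite — pairwise orthogonal nonzero vectors. Actually if $v,w$ are both nonzero and $\sf(v,w)=0$, and also $\sf(v,v)=0=\sf(w,w)$ automatically in char 2, then $\mathrm{span}\{v,w\}$ is totally isotropic. More generally, a set of pairwise-orthogonal nonzero vectors, together with $0$, might not be a subspace, but a maximal one is: the nonzero vectors of a maximal totally isotropic subspace form a clique. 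A maximal totally isotropic subspace of a $2m$-dimensional symplectic space has dimension $m$, hence $2^m-1$ nonzero vectors — exactly the clique size we want. And $2^{2m}-1 = (2^m+1)(2^m-1)$, so the counting works out perfectly. So the plan is to partition $\FF_2^{2m}\setminus\{0\}$ into $2^m+1$ maximal totally isotropic ($=$ Lagrangian) subspaces, pairwise intersecting only in $\{0\}$ — a so-called **spread** of the symplectic space.

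The plan: First I would recall/establish that symplectic spreads of $\FF_2^{2m}$ exist. The cleanest route is the standard construction via a field extension. Identify $\FF_2^{2m}$ with $\FF_{2^m}\oplus\FF_{2^m}$, or better with $\FF_{2^{2m}}$ viewed as a $2m$-dimensional $\FF_2$-space. For each $\lambda$ in $\FF_{2^m}\cup\{\infty\}$ define the subspace $L_\lambda := \{(x,\lambda x): x\in\FF_{2^m}\}$ for $\lambda\in\FF_{2^m}$ and $L_\infty := \{(0,y): y\in\FF_{2^m}\}$; these are $2^m+1$ subspaces of $\FF_2$-dimension $m$, any two meeting only in $0$, and their nonzero parts partition $\FF_2^{2m}\setminus\{0\}$ (this is just the classical fact that the lines through the origin in $\FF_{2^m}^2$ partition the nonzero vectors). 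The one thing to check is that each $L_\lambda$ is totally isotropic for a suitable symplectic form on $\FF_{2^m}\oplus\FF_{2^m}$; the form $S\big((x_1,y_1),(x_2,y_2)\big) := \mathrm{Tr}_{\FF_{2^m}/\FF_2}(x_1 y_2 - x_2 y_1)$ works, since on $L_\lambda$ one gets $\mathrm{Tr}(\lambda x_1 x_2 - \lambda x_2 x_1)=0$ in characteristic $2$ (and on $L_\infty$ it is manifestly zero). Then I would invoke the fact, stated in the definition, that any non-degenerate symplectic space over $\FF_2$ of dimension $2m$ is isomorphic to the canonical one, so the spread transports to $(\FF_2^{2m},\sf)$; since a symplectic isomorphism is a graph isomorphism of $\spg{2m}{\FF_2}$, it carries the clique partition over.

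Finally I would verify the two numerical facts — each $L_\lambda\setminus\{0\}$ has size $2^m-1$ and there are $2^m+1$ of them, and $(2^m+1)(2^m-1) = 2^{2m}-1$ accounts for every vertex exactly once — and note that each $L_\lambda\setminus\{0\}$ is a clique since $\sf$ vanishes identically on $L_\lambda$.

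The main obstacle is not conceptual but bookkeeping: one must be careful that the "lines through the origin partition nonzero vectors" picture really does use $\FF_{2^m}$-scalars (so the pieces have the right size $2^m-1$, not $1$) while the symplectic form and its isotropic subspaces live over $\FF_2$. Making the trace form do double duty — $\FF_2$-bilinear and alternating, yet with the $\FF_{2^m}$-lines isotropic — is the only place a genuine check is needed, and it is short. An alternative, more self-contained obstacle-free route, if one wants to avoid the field-extension construction, is an explicit recursive/inductive construction of the spread, but the trace-form argument is the shortest.
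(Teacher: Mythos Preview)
Your proposal is correct and is essentially the same construction the paper gives: identify $\FF_2^{2m}$ with $\FF_{2^m}\oplus\FF_{2^m}$, take the $2^m+1$ ``lines through the origin'' $L_\lambda$ and $L_\infty$, and check they are totally isotropic for the trace symplectic form $\Tr(x_1y_2+x_2y_1)$, then transport via the isomorphism of symplectic spaces. The paper calls this a \emph{symplectic spread} and cites \cite{Dye,Spreads}; your write-up adds the useful observation that the cliques are precisely the nonzero vectors of Lagrangian subspaces, which the paper leaves implicit.
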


\begin{proof}
	Such a partition of the symplectic graph is known as a \emph{symplectic spread}, and is well-known to exist (see for example~\cite{Dye}). We give a simple construction from~\cite{Spreads} below. Another proof in terms of commuting sets of Pauli operators is given in \cite{MUBs,MUBs2}.

	Let $N = 2^m$, and identify the vertices of $\spg{2m}{\FF_2}$ with the non-zero vectors in $\FF_N^2$. Consider the following symplectic form on $\FF_N^2$:
	\begin{equation}
		\sf_N\big((w,x), (y,z)\big) = \Tr(wz + xy),
	\end{equation}
where $\Tr: \FF_N \rightarrow \FF_2$ is the \emph{finite field trace} defined as $\Tr(a) := a + a^2 + a^{2^2} + \dotsc + a^{2^{m-1}}$. As $\ipb{x}{y}_N := \Tr(xy)$ is a non-degenerate inner product in $\FF_N$, the form $\sf_N$ is also non-degenerate. Hence, the symplectic spaces $(\FF_2^m, \sf)$ and $(\FF_N^2, \sf_N)$ are isomorphic. We will describe the partition for the later space.

	Denoting the multiplicative group (of order $N - 1$) in $\FF_N$ by $\FF_N^{\times} := \FF_N \setminus \set{0}$, the cells of a partition of the non-zero elements of $\FF_N^2$ are:
	\begin{align*}
	\pi_a & = \{(x,ax): x \in \FF_N^{\times}\} \qquad (a \in \FF_N), \\
	\pi_{N+1} & = \{(0,x): x \in \FF_N^{\times}\}.
	\end{align*}
	It is easy to check that these $N+1$ cells of size $N-1$ partition $\FF_N^2$. Moreover, if $(x,ax)$ and $(y,ay)$ are in the same cell, then
	\begin{equation}
		\sf\big((x,ax), (y,ay) \big) = \Tr(xay + axy) = \Tr(0) = 0.
	\end{equation}
	Therefore each cell is a clique.
\end{proof}

\newcommand{\f}{\normalsize}
\newcommand{\ft}{\f\texttt}

\begin{lemma}[\cite{E7,Purbhoo}]
$\spg{6}{\FF_2}$ has an orthonormal representation in $\RR^7$.
\label{lem:E7}
\end{lemma}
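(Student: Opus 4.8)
The plan is to realize the representation geometrically from the exceptional root system $E_7$. All $126$ roots of $E_7$ have squared length $2$, span $\RR^7$, and fall into $63 = 2^6-1$ antipodal pairs $\set{\pm r}$; to each of the $63$ lines $\RR r$ I would attach the unit vector $r/\sqrt2$ (the sign of the chosen representative will not matter). Two roots drawn from \emph{distinct} pairs have Euclidean inner product in $\set{-1,0,1}$, and are orthogonal exactly when that inner product vanishes. Thus the lemma comes down to matching the $63$ root pairs with the $63$ vertices of $\spg{6}{\FF_2}$ so that adjacency of two vertices forces the inner product of the two corresponding roots to be \emph{even}, hence $0$. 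Equivalently, I want an identification of vertex sets under which the symplectic form $\sf$ on $\FF_2^6$ becomes the mod-$2$ reduction of the Euclidean inner product of the associated roots.

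To build such an identification I would pass to the $E_7$ root lattice $L\subset\RR^7$: it is an even lattice of minimal squared length $2$, with $[L^*:L]=2$; fix $\omega$ generating $L^*/L$, and recall that the minimal squared length in $L^*$ is $\tfrac32$. On $L/2L\cong\FF_2^7$ the mod-$2$ reduction of $\ipb{\cdot}{\cdot}$ is a well-defined symmetric $\FF_2$-bilinear form, which is \emph{alternating} precisely because $L$ is even; its radical is one-dimensional, spanned by the class of $2\omega$ (this class is nonzero since $\omega\notin L$). Quotienting out the radical produces a $6$-dimensional space $W$ with a nondegenerate alternating form, hence — there being a unique symplectic geometry over $\FF_2$ in each even dimension — an isometry $W\cong(\FF_2^6,\sf)$ carrying the symplectic graph of $W$ onto $\spg{6}{\FF_2}$. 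Each pair $\set{\pm r}$ gives a well-defined class $\lc r\in W$, and the crucial point is that $\set{\pm r}\mapsto\lc r$ is a \emph{bijection} onto the $63$ nonzero vectors of $W$. The cleanest justification uses the classical coincidence $W(E_7)/\set{\pm\1}\cong\mathrm{Sp}_6(\FF_2)$: the map is $W(E_7)$-equivariant, $W(E_7)$ acts transitively on roots (hence on pairs), $\mathrm{Sp}_6(\FF_2)$ acts transitively on the nonzero vectors of $W$, so the image is all of $W\setminus\set{0}$, and equal cardinality forces bijectivity. A direct route also works: if $\lc r=\lc s$ then $(r-s)/2\in L^*$, so $\ipb{(r-s)/2}{r}=1-\tfrac12\ipb{r}{s}\in\ZZ$ forces $\ipb{r}{s}$ even; the cases $\ipb{r}{s}=\pm2$ give $s=\pm r$, while $\ipb{r}{s}=0$ is excluded because $(r-s)/2$ would then have squared length $1$, which no vector of $L^*$ has (it is neither in $L$, which is even, nor in $\omega+L$, whose minimum is $\tfrac32$); similarly $\lc r=0$ would place a nonzero vector of squared length $\tfrac12$ in $L^*$.

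Granting the identification, the conclusion is immediate. Transport the vertex labels of $\spg{6}{\FF_2}$ to the $63$ root lines and set $\vre{\lc r}:=r/\sqrt2$, a unit vector of $\RR^7$. If $\lc r$ and $\lc s$ are adjacent in $\spg{6}{\FF_2}$ then $\sf(\lc r,\lc s)=0$, i.e.\ $\ipb{r}{s}$ is even; since $\lc r\neq\lc s$ the roots lie in distinct pairs, so $\ipb{r}{s}\in\set{-1,0,1}$ and therefore $\ipb{r}{s}=0$, giving $\ipb{\vre{\lc r}}{\vre{\lc s}}=0$. Hence $\phi$ is an orthonormal representation of $\spg{6}{\FF_2}$ in $\RR^7$. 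I expect the only real obstacle to be the middle step: confirming that reducing $L$ modulo $2$ and killing the one-dimensional radical genuinely reproduces $(\FF_2^6,\sf)$ together with a bijection from the root pairs onto its points. Everything else is bookkeeping about $E_7$ inner products. Should one prefer to bypass the lattice computation, an explicit list of the $63$ lines exhibiting the required orthogonality pattern can instead be lifted from~\cite{E7,Purbhoo}.
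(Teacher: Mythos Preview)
Your argument is correct and follows the same overall strategy as the paper's Appendix~\ref{apx:E7Proof}: reduce the $E_7$ root lattice modulo $2$ to obtain a symplectic $\FF_2$-space, and show that root pairs biject onto its nonzero points in a way that carries the Euclidean inner product mod $2$ to $\sf$.

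The execution differs in flavour. The paper fixes explicit vectors $v_1,\dotsc,v_7\in\FF_2^6$ (one per simple root, satisfying a single linear relation coming from the vector $\w=\alpha_2+\alpha_5+\alpha_7$) and defines the map $\kappa$ in coordinates; injectivity on root pairs is then handled by a case split on $t\in\{0,1\}$ using a parity argument on the quadratic form $r(\lc\gamma)$ together with Cauchy--Schwarz. Your version is coordinate-free: you identify the target as $(L/2L)/\mathrm{rad}$, read off $\dim\mathrm{rad}=1$ from $[L^*:L]=2$, and dispose of injectivity in one stroke by observing that $\lc r=\lc s$ forces $(r-s)/2\in L^*$ and then excluding $\ipb{r}{s}=0$ via the minimal norm $3/2$ of $E_7^*\setminus E_7$. (Your $2\omega$ is exactly the paper's $\w$.) Your route is cleaner but imports the standard facts about $E_7^*$; the paper's is more self-contained but more computational. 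The optional transitivity argument via $W(E_7)/\{\pm\1\}\cong \mathrm{Sp}_6(\FF_2)$ is something the paper mentions as context but does not use in the proof.
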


\newcommand{\lattice}{\mc{L}} 

The entire representation, grouped into 9 complete (unnormalised) orthogonal bases, is given in a table in Appendix~\ref{apx:Representation} and it suffices to check that this has the desired properties to establish the result. Interestingly, it consists of vectors from the \emph{root system} $E_7$ and it is possible to give a more insightful description and proof of the representation in relation to this. Such a proof is given in Appendix~\ref{apx:E7Proof}.

Since the $63$ vertices of $\spg{6}{\FF_2}$ partition into $9$ cliques each of size $7$ (Lemma~\ref{lem:Cliques}), it follows by Theorem~\ref{thm:Protocol} that $C_0^{\mathrm{E}}\big(\spg{6}{ \FF_2}\big) = \log 9$, whereas we already established that $C_0\big(\spg{6}{ \FF_2}\big) = \log(2\times3 + 1) = \log 7$. This concludes the proof of Theorem \ref{mainthm}.

\subsection{The connection to $E_7$}

A deeper coincidence underlies the orthogonal representation of $\spg{6}{\FF_2}$ by roots of $E_7$. The automorphism group of $\spg{2m}{\FF_2}$ is the \emph{symplectic group} $Sp(2m, \FF_2)$, which is the group of linear maps on $\FF_2^{2m}$ which preserve the symplectic form. This group is isomorphic to quotient $W(E_7)/\{\pm \1\}$, where $W(E_7)$ is the \emph{Weyl group} of $E_7$.

\section{Relationship to the normal capacity}

Given a classical channel $\mathcal{N}$, its standard classical capacity $C(\mc{N})$ cannot be increased by the use of entanglement or even arbitrary non-signalling correlations between the sender and receiver~\cite{BSST}. The standard capacity and the assisted and unassisted zero-error capacities are related by
\begin{equation}
	C_0(\mc{N}) \leq C_0^{\mathrm{E}}(\mc{N}) \leq C(\mc{N}).
\end{equation}


\begin{theorem}
	Given a graph $G$ which satisfies the premises of Theorem \ref{thm:Protocol} (partitions into $k$ cliques of size $d$ and has an orthonormal representation in dimension $d$) and is also vertex-transitive, one can construct a channel $\mc{N}$ whose normal capacity $C(\mc{N})$ and $C_0^{\mathrm{E}}(\mc{N})$ are both equal to $C_0^{\mathrm{E}}(G)$ and are both achieved by a block-length one entanglement-assisted zero-error code.
\end{theorem}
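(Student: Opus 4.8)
The plan is to build a channel $\mc N$ whose confusability graph is \emph{exactly} $G$ — so that $C_0^{\mathrm E}(\mc N)=C_0^{\mathrm E}(G)=\log k$ by Theorem~\ref{thm:Protocol}, attained by the block-length-one protocol of Figure~\ref{fig:part-prot} — but which is otherwise as noisy as possible, so that its ordinary capacity $C(\mc N)$ is forced down to the same value $\log k$. Once the confusability graph is pinned down, the lower bound $C(\mc N)\ge C_0^{\mathrm E}(\mc N)=\log k$ is immediate from $C_0^{\mathrm E}\le C$, so the only real work is the upper bound $C(\mc N)\le\log k$; the idea is to arrange that every channel output reveals the input only up to a uniformly random choice within a $d$-clique, so that by symmetry the uniform input is optimal and leaves residual entropy exactly $\log d=\log(n/k)$.

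Write $n=|V(G)|=kd$ and $\Gamma=\mathrm{Aut}(G)$, which acts transitively on $V(G)$. The orthonormal representation in $\CC^d$ forces every clique of $G$ to have at most $d$ vertices, while the partition supplies $d$-cliques, so the maximum cliques of $G$ are precisely its $d$-cliques and the partition cells $S_1,\dots,S_k$ are among them. Fix a $\Gamma$-invariant family $\mc F$ of maximum cliques of $G$ that covers every edge of $G$; for the graphs at hand one may take $\mc F$ to be the set of \emph{all} maximum cliques, using the fact — which follows from a standard extension argument in the relevant symplectic/root-system geometry — that every edge extends to a maximum clique. By vertex-transitivity each vertex lies in the same number $r\ge 1$ of cliques of $\mc F$. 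Let $\mc N$ have input alphabet $V(G)$ and output alphabet $\mc F$, with $\mc N(A\mid x)$ equal to $1/r$ if $x\in A$ and $0$ otherwise; note $\sum_{A}\mc N(A\mid x)=1$.

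It then remains to verify three points. First, distinct $x,x'$ are confusable under $\mc N$ iff they lie in a common $A\in\mc F$, which, since each $A$ is a clique of $G$ and $\mc F$ covers all edges, happens iff $xx'\in E(G)$; hence $G(\mc N)=G$ and Theorem~\ref{thm:Protocol} applies. Second, $\Gamma$ acts on inputs and, by permutation of cliques, on outputs, with $\mc N(\gamma A\mid\gamma x)=\mc N(A\mid x)$, so $\mc N$ is $\Gamma$-symmetric; by concavity of the input–output mutual information in the input distribution together with this symmetry, replacing any input distribution by its $\Gamma$-average can only increase the mutual information, and that average is the uniform distribution because $\Gamma$ is transitive — so $C(\mc N)$ is attained at the uniform input. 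Third, for the uniform input $X$, Bayes' rule makes the posterior of $X$ given any output $A$ uniform on the $d$-element set $A$, so $H(X\mid Y)=\log d$ and $C(\mc N)=\log n-\log d=\log k$. Combining these with $\log k\le C_0^{\mathrm E}(\mc N)\le C(\mc N)=\log k$ gives $C(\mc N)=C_0^{\mathrm E}(\mc N)=\log k=C_0^{\mathrm E}(G)$, both attained by the block-length-one entanglement-assisted zero-error code of Figure~\ref{fig:part-prot}. The step I expect to require the most care is the choice of $\mc F$: it must simultaneously be $\Gamma$-invariant (so the symmetrization argument applies), cover every edge of $G$ (so that no edge of $G$ is lost and $G(\mc N)=G$ exactly, which is what makes $\mc N$ the channel of genuine interest), and consist only of size-$d$ cliques (so the capacity computation yields exactly $\log d$ of conditional entropy). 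For a general vertex-transitive $G$ satisfying the hypotheses this reduces to checking that every edge lies in a maximum clique, which holds for the $E_7$- and $E_8$-based examples; the remaining steps are routine.
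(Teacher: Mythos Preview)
Your channel construction and capacity computation are essentially those of the paper: take the input alphabet to be $V(G)$, the output alphabet to be the set of all $d$-cliques of $G$, and on input $x$ output a uniformly random $d$-clique containing $x$; vertex-transitivity gives the regularity needed to compute $C(\mc N)=\log(|V|/d)=\log k$ (the paper invokes Shannon's analysis of such ``uniform'' channels, which is exactly your symmetry/concavity argument).

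Where you diverge is in insisting that $G(\mc N)=G$, and this is where the artificial gap enters. The theorem does not ask for a channel with confusability graph equal to $G$; it asks for a channel with $C_0^{\mathrm E}(\mc N)=C_0^{\mathrm E}(G)=\log k$. The paper simply observes that $G(\mc N)$ is a \emph{subgraph} of $G$ on the same vertex set: two inputs are confusable iff they lie in a common $d$-clique of $G$, which forces them to be adjacent in $G$, but not conversely. That is already enough. The partition cells $S_1,\dots,S_k$ are themselves $d$-cliques, hence outputs of $\mc N$, so each $S_i$ remains a clique in $G(\mc N)$; and since $G(\mc N)\subseteq G$, the $d$-dimensional orthonormal representation of $G$ is automatically one of $G(\mc N)$ (fewer edges means fewer orthogonality constraints). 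Thus Theorem~\ref{thm:Protocol} applies directly to $G(\mc N)$, giving $C_0^{\mathrm E}(\mc N)=\log k$ with no need for every edge of $G$ to lie in a maximum clique.

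So the step you flagged as ``requiring the most care'' --- choosing an edge-covering, $\Gamma$-invariant family $\mc F$ --- is simply not needed, and dropping it removes the only gap in your argument.
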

\begin{proof}
	Let $X$ be the vertices of $G$ and let $Y$ be the set of \emph{all} cliques of size $d$ in $G$. Since $G$ is vertex-transitive, each vertex is contained in the same number $m$ of cliques from $Y$. Counting the number of pairs in the set $\set{(x,y) \in X\times Y: x \in y}$ in two ways we have
	\begin{equation}
		\bigl| \{(x,y) \in X\times Y: x \in y\} \bigr| = |Y|d = |X|m.
	\end{equation}
	Let $\mc{N}$ be the channel which on the input $x \in X$ produces an output uniformly at random from the set $\{y : x \in y\} \subseteq Y$. Using the analysis of Section~16 of \cite{Shannon1},
	\begin{equation}
		C(\mc{N}) = \log \frac{|Y|}{m} = \log \frac{|X|}{d} = \log k.
	\end{equation}
	$G(\mc{N})$ also partitions into $k$ cliques of size $d$ and, since it is a subgraph of $G$, has an orthonormal representation in dimension $d$. Therefore $C_0^{\mathrm{E}}(\mc{N}) = \log k$ and, furthermore, this rate is achieved by the block-length one entanglement-assisted protocol of Figure \ref{fig:part-prot}.
\end{proof}

The symplectic graphs are all vertex transitive so, remarkably, the channel constructed in this way for $\spg{6}{\FF_2}$ has $C = C_0^{\mathrm{E}} = \log 9$, even though there is a positive lower bound on the error probability for \emph{classical} codes with any rate greater than $\log 7$ (as well as an upper bound, both decaying exponentially with $n$) \cite{BGS}.

\section{Graphs from $E_8$ and other root systems}

We define the orthogonality graph of a root system $R$ as follows. The vectors of $R$ occur in antipodal pairs $\{v,-v\}$; the vertices $V(R)$ of the graph are the $|R|/2$ rays spanned by these antipodal pairs. Two vertices are adjacent if and only if their rays are orthogonal. The graph of Section~\ref{sect:Main} is precisely the orthogonality graph of $E_7$. This raises the question of whether a channel whose confusability graph is the orthogonality graph of another irreducible root system can also exhibit a gap between the classical and entanglement-assisted zero-error capacities. We find that the orthogonality graph of $E_8$ provides a second example of such a gap, and furthermore, the ratio between the assisted and classical capacities is larger in this case.

The irreducible root systems consist of the infinite families $A_n$, $B_n$, $C_n$, $D_n$ where $n \in \mathbb{N}$, and the exceptional cases $E_6$, $E_7$, $E_8$, $F_4$, and $G_2$ (see~\cite{Humphreys}). We show that for all of the infinite families, and for $G_2$, there is no gap between the independence number $\alpha$ and the Lov\'asz number $\vartheta$, so $C_0^{\mathrm{E}} = C_0$ for these graphs. However, the orthogonality graph of $E_8$ provides a second example of a gap between the classical and entanglement-assisted capacity. For $E_8$, we show that $C_0 \leq \log 9$ while $C_0^{\mathrm{E}} =\log 15$. It is interesting to note that the graph used in~\cite{CLMW} is precisely the orthogonality graph of $F_4$ and while we know the entanglement-assisted capacity of this graph, we still do not know its unassisted capacity or whether it is smaller than the assisted one. We do not give either capacity for the graph of $E_6$.

In what follows the name of the root system is also used as the name of the orthogonality graph and $e_i$ denotes the $i$-th standard basis vector. We ignore correct normalization of the root vectors for simplicity, since it clearly doesn't affect the orthogonality graph.

\subsection*{Root system $E_8$}

\begin{equation}
V(E_8) = \{e_i \pm e_j: 1 \leq i < j \leq 8\} \cup \big\{(x_1,\ldots,x_8) : x_i = \pm 1, \textstyle\prod_{i=1}^8 x_i = 1\big\}
\end{equation}
As pointed out in~\cite{Panigrahi}, $E_8$ is the graph whose vertices are the non-isotropic points in the ambient projective space of the polar space $O^+(8, \FF_2)$, with vertices adjacent if they are orthogonal with respect to the associated bilinear form. The ambient projective space of $O^+(2m,\FF_2)$ is $\mathbb{P} \FF_2^{2m}$. Since the bilinear form associated with $O^+(2m,\FF_2)$ is symplectic, it follows immediately that $E_8$ is an induced subgraph of the symplectic graph $\spg{2m}{\FF_2}$ with $m = 4$. By Theorem~\ref{sc},
\begin{equation}
C_0(E_8) \leq C_0\big(\spg{8}{\FF_2}\big) = \log 9.
\end{equation}

On the other hand, let $N = 16$ and identify the vertices of $\spg{8}{\FF_2}$ with the non-zero vectors of $\FF_N^2$. Then we may choose the quadratic form of $O^+(8,\FF_2)$ to be $(x,y) \mapsto \Tr(xy)$, where $\Tr: \FF_N \rightarrow \FF_2$ is the finite field trace. The polarization of this quadratic form is the symplectic form $\sf\big((w,x),(y,z) \big) = \Tr(wz+xy)$. With this choice, the vertices of $E_8$, i.e., the non-isotropic vectors in $\spg{8}{\FF_2}$, are those $(x,y) \in \FF_N^2$ such that $\Tr(xy) = 1$. Now, consider the partition of vertices into cliques given in Lemma~\ref{lem:Cliques}, restricted to the vertices of $E_8$:
\begin{equation}
	\pi_a = \{(x,ax): x \in \FF_N^{\times}, \Tr(ax^2) = 1\}, \qquad (a \in \FF_N^\times). \\
\end{equation}
Recall that $\Tr(ax^2) = \Tr(a^2x^4) = \ldots = \Tr(a^8x)$. For each $a \in \FF_N^\times$, there are exactly $8$ choices of $x \in \FF_N^\times$ such that $\Tr(a^8x) = 1$. Therefore, $\{\pi_a: a \in \FF_N^\times\}$ is a partition of the vertices of $E_8$ into 15 cliques of size $8$. By Theorem~\ref{thm:Protocol},
\begin{equation}
C_0^{\mathrm{E}}(E_8) = \log 15.
\end{equation}


\subsection*{Root system $A_n$ ($n \geq 1$)}

\begin{equation}
V(A_n) = \{e_i-e_j: 1 \leq i < j \leq n+1\}.
\end{equation}
This graph is isomorphic to the Kneser graph $\KG{n+1}{2}$. By a result of Lov\'asz (Theorem 13 of~\cite{Lovasz}),
\begin{equation}
\al(A_n) = \vartheta(A_n) = \Theta(A_n) = n.
\end{equation}
\subsection*{Root system $D_n$ ($n \geq 4$)}
\begin{equation}
V(D_n) = \{e_i \pm e_j: 1 \leq i < j \leq n\}.
\end{equation}
Note that the vertices $\{e_i-e_j: 1 \leq i < j \leq n\}$ induce a subgraph isomorphic to $A_{n-1} \cong \KG{n}{2}$. Also note that $e_i + e_j$ and $e_i - e_j$ are adjacent and have the same neighbourhood (apart from themselves). It follows that $D_n$ is isomorphic to $\KG{n}{2} \boxtimes K_2$, the strong graph product of a Kneser graph and a complete graph on $2$ vertices. By Theorem 7 of~\cite{Lovasz},
\begin{equation}
\Theta(D_n) = \Theta(\KG{n}{2}) \Theta(K_2) = n-1.
\end{equation}
Since $\{e_1 - e_j: 2 \leq j \leq n\}$ is an independent set of size $n-1$, it follows that 
\begin{equation}
\al(D_n) = \vartheta(D_n) = \Theta(D_n) = n-1.
\end{equation}

\subsection*{Root system $B_n$ ($n \geq 2$)}

\begin{equation}
V(B_n) = \{e_i \pm e_j: 1 \leq i < j \leq n\} \cup \{e_i : 1 \leq i \leq n\}.
\end{equation}
To find the Lov\'asz number we consider even and odd $n$ separately. When $n$ is odd, partition the vertices into sets $\pi_1,\ldots,\pi_n$, where
\begin{equation}
\pi_k = \{e_i \pm e_j: i < j, i+j \equiv 2k \pmod{n}\} \cup \{e_k\}. 
\end{equation}
Each $\pi_k$ is a clique of size $n$. When $n$ is even, partition the vertices into sets $\pi_1,\ldots,\pi_n$, where
\begin{align*}
\pi_k & = \set{e_i \pm e_j: i < j, i+j \equiv 2k \pmod{n-1}} \cup \set{e_k \pm e_n} \quad (k \leq n-1); \\
\pi_n & = \set{e_1,\ldots,e_n}.
\end{align*}
Again each $\pi_k$ is a clique of size $n$. In either case, we have partitioned the graph into $n$ cliques of size $n$. By Theorem~\ref{thm:Protocol}, $\Theta(B_n) = n$.

Since $\{e_1 - e_j: 2 \leq j \leq n\} \cup \{e_1\}$ is an independent set of size $n$, we have
\begin{equation}
\al(B_n) = \vartheta(B_n) = \Theta(B_n) = n.
\end{equation}

\subsection*{Root system $C_n$ ($n \geq 2$)}

\begin{equation}
V(C_n) = \{e_i \pm e_j: 1 \leq i < j \leq n\} \cup \{2e_i : 1 \leq i \leq n\}.
\end{equation}
$C_n$ has the same orthogonality graph as $B_n$.

\subsection*{Root system $G_2$}

\begin{equation}
V(G_2) = \{(1,-1,0),(1,0,-1),(0,1,-1),(1,1,-2),(1,-2,1),(-2,1,1)\}.
\end{equation}
By inspection $G_2$ has an independent set of size $3$ and can be partitioned into $3$ cliques of size $2$. By Theorem~\ref{thm:Protocol},
\begin{equation}
\al(G_2) = \vartheta(G_2) = \Theta(G_2) = 3.
\end{equation}

\section{Conclusion}

We have shown that it is possible for entanglement to increase the asymptotic rate of zero-error classical communication over some classical channels. This is quite different from the situation for families of codes which only achieve arbitrarily small error rates asymptotically. The best rate that can be achieved by classical codes in this context is the Shannon capacity and entanglement \emph{cannot} increase this rate. The entanglement-assisted capacity has a simple formula which reduces to the formula for the Shannon capacity when the channel is classical \cite{CRST}.

It is interesting to note that in every example of a graph with $M_0(G) > M_0^{\mathrm{E}}(G)$ found to date, the entanglement-assisted capacity is attained by a code of block length one. This certainly is not true of the entanglement-assisted capacities of all graphs. In \cite{ZEIT}, an interesting observation of Arikan is reported: The graph consisting of a five cycle and one more isolated vertex has $\Theta = \sqrt{5}+1$. Since no positive integer power of this quantity is an integer, the capacity is not attained by any finite length block code for this graph. Since the Lov\'{a}sz number of this graph is also $\sqrt{5}+1$ (the Lov\'{a}sz number is additive for disjoint unions of graphs, and was calculated for cycles in \cite{Lovasz}), the same observation is true for the entanglement-assisted capacity, which in this case is equal to the unassisted capacity.

Our result has an interesting interpretation in terms of Kochen-Specker (KS) proofs of non-contextuality. Such a proof specifies a set of complete, projective measurements, with some projectors in common, such that there is no way to consistently assign a truth value to each projector. An assignment is consistent if (a) precisely one projector in each measurement is ``true'' and (b) no two ``true'' projectors are orthogonal.

Ruuge \cite{ruuge} shows that the root systems $E_7$ and $E_8$ can be used to construct KS proofs using computer search to nullify the possibility of a consistent assignment. This is a corollary of our results, but our proof is analytic due to the novel application of the Haemers bound. In fact, the use of the Haemers bound provides a whole sequence of KS proofs which are increasingly strong in the following quantitative sense: For the set of $9^n$ measurements which are obtained by tensoring together $n$ of Alice's 9 measurements, only $7^n$ can be assigned values in accordance with property (a) before property (b) must be violated.

Three main avenues for further research are apparent to us. First, is it possible to give a general algorithm to compute $C_0^{\mathrm{E}}$? More specific related problems include determining whether $C_0^{\mathrm{E}}(G)/C_0(G)$ can be arbitrarily large, and whether there are graphs where $C_0^{\mathrm{E}}(G)$ is strictly less than $\log \vartheta (G) $.

Secondly, we have already shown that there are some connections to multi-prover games and to non-contextuality, but we feel that a deeper understanding of these connections is possible and desirable. For example, the application of our result to KS proofs mentioned above suggests some stronger notion of non-contextuality in quantum mechanics. 

Finally, our work on entanglement-assisted zero-error codes can be placed in the wider context of using entanglement to reduce decoding error in finite block length coding of classical information for classical channels (demonstrating this effect is even experimentally feasible \cite{exp}), and characterising this phenomenon presents an even wider set of questions.
\\
\\
\noindent{\bf Acknowledgements}\\
We would like to thank Andrew Childs, Richard Cleve, David Roberson, Simone Severini, and Andreas Winter for useful discussions.
This work was supported by NSERC, QuantumWorks, CIFAR, CFI, and ORF. Aidan Roy acknowledges support by a UW/Fields Institute Award.

\bibliographystyle{unsrturl} 

\bibliography{E7}

\newpage

\appendix

\section{The orthogonal representation of $\spg{6}{\FF_2}$ in full} \label{apx:Representation}

\begin{table}[h!]
	\center
	\includegraphics[scale=0.95]{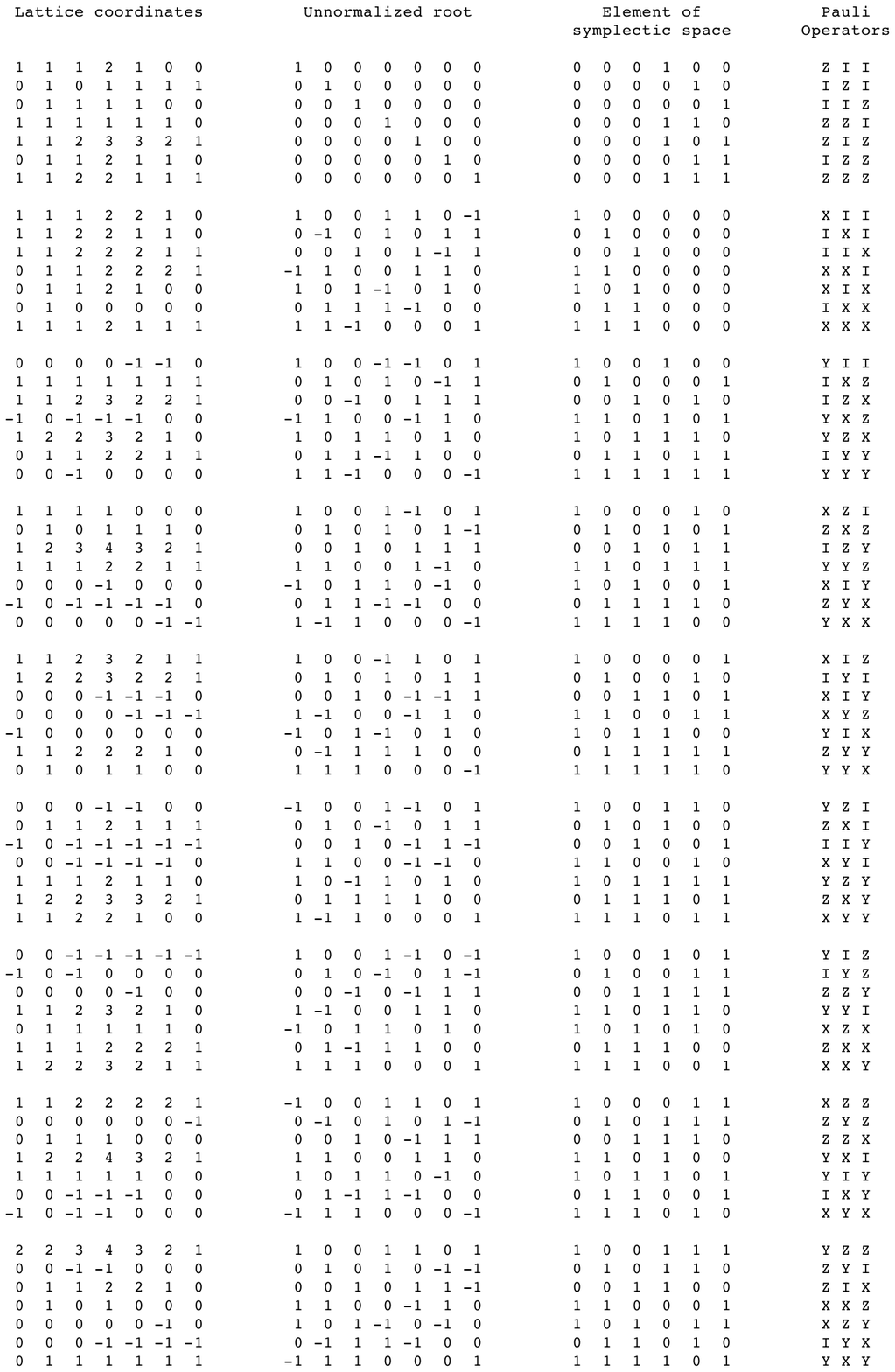}
	\caption{Full listing of the orthogonal representation of $\spg{6}{\FF_2}$ grouped into $9$ complete orthogonal basis. For each row, the first column shows the lattice coordinate $(n_1, \dotsc, n_7)$, followed by the real coordinates of the corresponding root $\sum_{i=1}^7 n_i \alpha_i$, followed by the corresponding element $\sum_{i=1}^7 n_i v_i$ of $\mathbb{F}^6_2$.  The last column rephrases the $\mathbb{F}^6_2$ element as a $3$-qubit Pauli operator.}
\end{table}

\section{The proof of Lemma~\ref{lem:E7}} \label{apx:E7Proof}

\begin{figure}[h]
  \centering
  \includegraphics{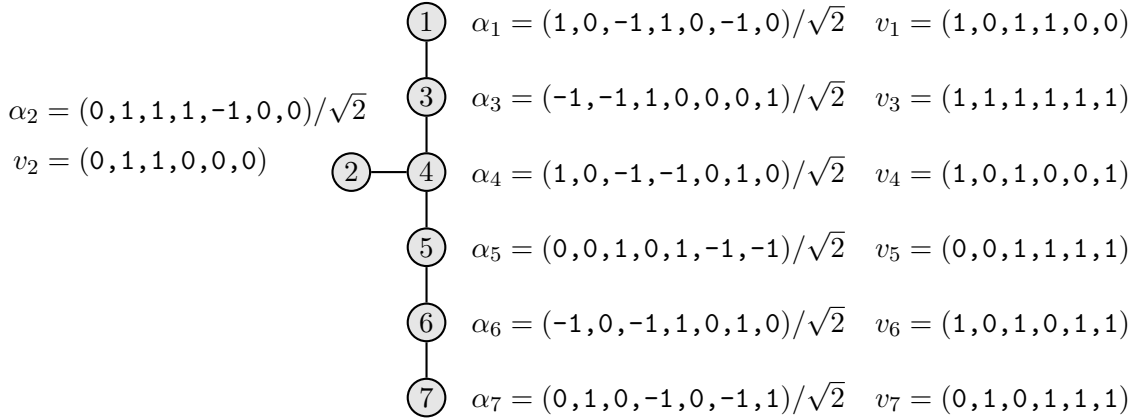}
  \caption{The Dynkin diagram of $E_7$. Each node is a simple root $\alpha_i \in \RR^7$ and the edges determine their inner products according to equation~(\ref{eq:inner products}). The corresponding vectors $v_i \in \FF_2^6$ used in the definition of $\kappa$ (\ref{kappa}) are also shown.}
  \label{fig:Dynkin}
\end{figure}

 We first give some basic definitions and facts about $E_7$ and its lattice. Let $\alpha_1, \dotsc, \alpha_7 \in \RR^7$ be the vectors given in Figure~\ref{fig:Dynkin}, known as \emph{simple roots} of $E_7$. Their inner products are encoded by \emph{Dynkin diagram} shown in Figure~\ref{fig:Dynkin} as follows:
\begin{equation}
  \ipb{\alpha_i}{\alpha_j} =
  \begin{cases}
    \phantom{-}0 & \text{if nodes $i \neq j$ are not connected}, \\
              -1 & \text{if nodes $i \neq j$ are connected}, \\
    \phantom{-}2 & \text{if $i=j$}.
  \end{cases}
  \label{eq:inner products}
\end{equation}
All integer linear combinations of the simple roots form the {\bf $\boldsymbol{E}_7$ lattice}
\begin{equation}
  \lattice := \bigl\{ \textstyle \sum_{i=1}^7 n_i \alpha_i : n_1, \dotsc, n_7 \in \ZZ \bigr\}.
\end{equation}
For $\gamma = \sum_{i=1}^7 n_i \alpha_i$, let $\lc{\gamma} := (n_1,n_2,\ldots,n_7)$ denote the lattice coordinates of $\gamma$. The inner product between two lattice vectors is
\[
	\ipb{\gamma}{\delta} = \sum_{i,j}\lc{\gamma}_i\lc{\delta}_j \ipb{\alpha_i}{\alpha_j}
	= 2 \sum_i \lc{\gamma}_i\lc{\delta}_i - 2 \sum_{\{i,j\} \in \de} \lc{\gamma}_i\lc{\delta}_j
\]
where $\de = \{\{1,3\},\{2,4\},\{3,4\},\{4,5\},\{5,6\},\{6,7\}\}$ is the set of edges in the Dynkin diagram. Note that the inner product $\ipb{\gamma}{\delta}$ is an \emph{even integer} for all $\gamma$, $\delta$.

The \emph{root system $E_7$} is the set vectors of norm $\sqrt{2}$ in $\lattice$:\footnote{One can check that this agrees with the more common definition of $E_7$ as the orbit of $\alpha_1$ under the reflection group $\langle R_1, \dotsc, R_7 \rangle$, where $R_i := \1 - 2 \hat{\alpha}_i \hat{\alpha}_i \tp$ and $\hat{\alpha}_i$ is the unit vector in direction $\alpha_i$.}
\begin{equation}
  E_7 := \set{\gamma \in \lattice : \ipb{\gamma}{\gamma} = 2}.
\end{equation}
In terms of lattice coordinates, the condition $\ipb{\gamma}{\gamma} = 2$ can be expressed as
\begin{equation}
	  r(\lc{\gamma}) := \lc{\gamma}_1^2 + \lc{\gamma}_2^2 + \lc{\gamma}_3^2 + \lc{\gamma}_4^2 + \lc{\gamma}_5^2 + \lc{\gamma}_6^2 + \lc{\gamma}_7^2
	  - \lc{\gamma}_1 \lc{\gamma}_3
	  - \lc{\gamma}_2 \lc{\gamma}_4
	  - \lc{\gamma}_3 \lc{\gamma}_4
	  - \lc{\gamma}_4 \lc{\gamma}_5
	  - \lc{\gamma}_5 \lc{\gamma}_6
	  - \lc{\gamma}_6 \lc{\gamma}_7 = 1.
	  \label{eq:poly}
\end{equation}

Following \cite[Section 4.6]{E7} and \cite[Section 3.2]{Purbhoo} we we define $\kappa: \mc{L} \to \FF_2^6$ by
\begin{equation}
	\kappa(\gamma) := \sum_{i = 1}^{7} \lc{\gamma}_i v_i \mod 2.
	\label{kappa}
\end{equation}
The $v_i$ (defined in the figure above) are chosen so that $\sigma(\kappa(\alpha_i), \kappa(\alpha_j)) = \ipb{\alpha_i}{\alpha_j} \mod 2$. This extends to all lattice vectors by linearity of $\kappa$:
\begin{equation}\label{compat}
	\sigma(\kappa(\gamma), \kappa(\delta)) = \sum_{i,j} \lc{\gamma}_i\lc{\delta}_j \sigma(\kappa(\alpha_i), \kappa(\alpha_j)) = \sum_{i,j} \lc{\gamma}_i\lc{\delta}_j \ipb{\alpha_i}{\alpha_j} = \ipb{\gamma}{\delta} \mod 2.
\end{equation}

We can write $\kappa(\gamma) = \kappa'(\gamma) \mod 2$ where $\kappa': \mc{L} \to \mathbb{Z}^6$ is defined by $\kappa'(\gamma) := \sum_{i = 1}^{7} \lc{\gamma}_i v_i$ with the $v_i$ are treated as vectors in $\mathbb{Z}^6$. It is easily checked that the kernel of $\kappa'$ is the set $\{ m \w : m \in \mathbb{Z} \}$ where $\w = \alpha_2 + \alpha_5 + \alpha_7$ (i.e. $\lc{\w} = (0,1,0,0,1,0,1)$). Therefore,
\begin{lemma}
	$\kappa(\gamma) = 0$ iff $\gamma = 2 \delta + t \w$ for some $\delta \in \mc{L}$ and $t \in \{0,1\}$.
\end{lemma}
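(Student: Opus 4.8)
The plan is to reduce the statement to a one-line computation of the kernel of the mod-$2$ reduction of $\kappa'$. Write $\lc\gamma = (n_1,\dots,n_7) \in \ZZ^7$ for the lattice coordinates of $\gamma = \sum_i n_i\alpha_i$. Since $\kappa(\gamma) = \kappa'(\gamma) \bmod 2$ and $\kappa'(\gamma) = \sum_i n_i v_i$ is linear in the $n_i$, the value $\kappa(\gamma)$ depends only on $\lc\gamma \bmod 2$; thus $\kappa$ factors as $\kappa(\gamma) = \bar\kappa(\lc\gamma \bmod 2)$ for the $\FF_2$-linear map $\bar\kappa: \FF_2^7 \to \FF_2^6$, $e_i \mapsto v_i \bmod 2$. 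Consequently $\kappa(\gamma) = 0$ if and only if $\lc\gamma \bmod 2 \in \ker\bar\kappa$, and everything comes down to identifying $\ker\bar\kappa$.

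The first step is to show $\ker\bar\kappa = \{0,\bar\w\}$, where $\bar\w := \lc\w \bmod 2 = (0,1,0,0,1,0,1)$. Since $\w \in \ker\kappa'$ we have $\bar\w \in \ker\bar\kappa$, and $\bar\w \neq 0$ because $\lc\w$ has odd entries (equivalently $\w \notin 2\mc{L}$). It remains to check that $\dim_{\FF_2}\ker\bar\kappa = 1$, i.e.\ that $v_1 \bmod 2, \dots, v_7 \bmod 2$ span $\FF_2^6$. This can be verified directly from the list of the $v_i$ in Figure~\ref{fig:Dynkin}; alternatively it follows structurally from the compatibility relation $\sigma(v_i,v_j) = \ipb{\alpha_i}{\alpha_j} \bmod 2$: the left-hand side is the Gram matrix of the $v_i$ with respect to the non-degenerate form $\sigma$, while the right-hand side is the Cartan matrix of $E_7$ reduced mod $2$, which has rank $6$ (its Smith normal form over $\ZZ$ is $\mathrm{diag}(1,\dots,1,2)$); non-degeneracy of $\sigma$ then forces the $v_i$ to span a $6$-dimensional space.

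Given this, the second step is immediate bookkeeping. If $\kappa(\gamma) = 0$ then $\lc\gamma \bmod 2$ is either $0$ or $\bar\w$. In the first case all $n_i$ are even, so $\gamma = 2\delta$ with $\delta := \sum_i (n_i/2)\alpha_i \in \mc{L}$ and $t = 0$. In the second case $\lc{\gamma - \w} = \lc\gamma - \lc\w$ has all even entries, so $\gamma - \w = 2\delta$ with $\delta \in \mc{L}$, and $t = 1$. Conversely, if $\gamma = 2\delta + t\w$ with $\delta \in \mc{L}$ and $t \in \{0,1\}$, then $\kappa(\gamma) = \kappa'(2\delta + t\w) \bmod 2 = (2\kappa'(\delta) + t\,\kappa'(\w)) \bmod 2 = 0$, using $\kappa'(\w) = 0$.

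The one point needing a little care — the main (and rather minor) obstacle — is the dimension count for $\ker\bar\kappa$: knowing only that the \emph{integer} kernel of $\kappa'$ equals $\ZZ\w$ does not by itself rule out that reduction mod $2$ enlarges the kernel, so one genuinely has to exhibit six independent columns among the $v_i \bmod 2$ (or invoke the Cartan-matrix rank argument above, which also explains why the exceptional vector $\w$ and the factor $2$ appear). Everything else is routine.
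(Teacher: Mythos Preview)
Your proof is correct and follows essentially the same approach as the paper, which simply states that ``it is easily checked that the kernel of $\kappa'$ is $\{m\w : m\in\ZZ\}$'' and then declares the lemma as an immediate consequence. You are in fact more careful: you correctly observe that knowing only $\ker\kappa'=\ZZ\w$ over $\ZZ$ does not by itself determine $\ker\bar\kappa$ over $\FF_2$, and you supply the missing surjectivity/rank check (either by direct inspection of the $v_i$ or via the Cartan-matrix argument), a point the paper leaves implicit in ``easily checked.''
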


\begin{lemma}
	If $\kappa(\gamma) = 0$ then $\gamma$ is not a root.
\end{lemma}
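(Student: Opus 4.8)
The plan is to combine the preceding lemma with an elementary parity argument on norms modulo $8$. By the preceding lemma, $\kappa(\gamma) = 0$ forces $\gamma = 2\delta + t\w$ for some $\delta \in \lattice$ and some $t \in \set{0,1}$, so it suffices to show that no vector of this shape can have norm $2$, i.e. that $\ipb{\gamma}{\gamma} \neq 2$ in either case.

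The one concrete computation needed is $\ipb{\w}{\w}$. Since $\w = \alpha_2 + \alpha_5 + \alpha_7$ and nodes $2$, $5$, $7$ are pairwise non-adjacent in the Dynkin diagram (Figure~\ref{fig:Dynkin}), the simple roots $\alpha_2, \alpha_5, \alpha_7$ are pairwise orthogonal, whence $\ipb{\w}{\w} = \ipb{\alpha_2}{\alpha_2} + \ipb{\alpha_5}{\alpha_5} + \ipb{\alpha_7}{\alpha_7} = 6$. (Equivalently, one can just evaluate $r(\lc{\w})$ in (\ref{eq:poly}) at $\lc{\w} = (0,1,0,0,1,0,1)$ to get $3$, so $\ipb{\w}{\w} = 2 r(\lc{\w}) = 6$.)

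Now expand
\[
	\ipb{\gamma}{\gamma} = \ipb{2\delta + t\w}{2\delta + t\w} = 4\ipb{\delta}{\delta} + 4t\,\ipb{\delta}{\w} + 6t^2 .
\]
Using the fact (recorded just after the definition of $\lattice$) that $\ipb{\cdot}{\cdot}$ takes even integer values on $\lattice$, both $\ipb{\delta}{\delta}$ and $\ipb{\delta}{\w}$ are even, so $4\ipb{\delta}{\delta}$ and $4t\,\ipb{\delta}{\w}$ are each divisible by $8$. Since $t \in \set{0,1}$ we have $t^2 = t$, so $\ipb{\gamma}{\gamma} \equiv 6t \pmod 8$, which equals $0$ when $t = 0$ and $6$ when $t = 1$ — in particular never $2$. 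Hence $\gamma \notin E_7$, i.e.\ $\gamma$ is not a root.

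There is essentially no obstacle here: the only point requiring care is the bookkeeping modulo $8$ — specifically that $4$ times an even integer is a multiple of $8$, not merely of $4$ — which is exactly what excludes the target value $2$ and makes the $t = 1$ case (where $\ipb{\gamma}{\gamma}$ is an odd multiple of $2$, but $\equiv 6$ rather than $\equiv 2$) go through.
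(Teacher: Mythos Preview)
Your proof is correct and follows essentially the same approach as the paper: reduce via the preceding lemma to $\gamma = 2\delta + t\w$, compute $\ipb{\w}{\w} = 6$, and use evenness of lattice inner products to conclude $\ipb{\gamma}{\gamma} \equiv 6t \pmod 8 \neq 2$. If anything, your version is slightly more careful, since you keep the $t$-dependence explicit throughout and thereby handle the $t=0$ case cleanly, whereas the paper's displayed expansion tacitly assumes $t=1$.
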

\begin{proof}
	If $\kappa(\gamma) = 0$ then $\gamma = 2\delta + t \w$ for some $\delta \in \mc{L}$, $t \in \{0,1\}$, so $\ipb{\gamma}{\gamma} = 4\ipb{\delta}{\delta} + 4\ipb{\delta}{\w} + \ipb{\w}{\w}$.
	The first two inner products are even integers and $\ipb{\w}{\w} = 6$, so for some integer $m$, $\ipb{\gamma}{\gamma} = 8m + 6 \neq 2$ and can't be a root by definition.
\end{proof}

\begin{lemma}
$\alpha, \beta \in E_7$ and $\kappa(\alpha) = \kappa(\beta)$ iff $\beta = \pm\alpha$.
\end{lemma}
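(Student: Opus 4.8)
The forward implication is immediate: if $\beta = \pm\alpha$ then, since $\kappa$ is $\FF_2$-linear and $-1 = 1$ in $\FF_2$, we get $\kappa(\beta) = \kappa(\pm\alpha) = \kappa(\alpha)$. So the content is in the converse, and the plan is to compute a single inner product two different ways.

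Suppose $\alpha,\beta \in E_7$ with $\kappa(\alpha) = \kappa(\beta)$. By linearity of $\kappa$ we have $\kappa(\alpha-\beta) = \kappa(\alpha) - \kappa(\beta) = 0$, so by the preceding lemma $\alpha - \beta = 2\delta + t\w$ for some $\delta \in \mc{L}$ and $t \in \set{0,1}$. I would now bound $\ipb{\alpha-\beta}{\alpha-\beta}$ from the root side: since $\ipb{\alpha}{\alpha} = \ipb{\beta}{\beta} = 2$, we have $\ipb{\alpha-\beta}{\alpha-\beta} = 4 - 2\ipb{\alpha}{\beta}$; and since $\ipb{\alpha}{\beta}$ is an even integer (all lattice inner products are) with $|\ipb{\alpha}{\beta}| \le \norm{\alpha}\,\norm{\beta} = 2$ by Cauchy--Schwarz, necessarily $\ipb{\alpha}{\beta} \in \set{-2,0,2}$ and hence $\ipb{\alpha-\beta}{\alpha-\beta} \in \set{0,4,8}$.

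Next I would expand from the other side: $\ipb{\alpha-\beta}{\alpha-\beta} = 4\ipb{\delta}{\delta} + 4t\ipb{\delta}{\w} + t\ipb{\w}{\w}$ (using $t^2 = t$). Here $\ipb{\delta}{\delta}$ and $\ipb{\delta}{\w}$ are even integers, so the first two terms are divisible by $8$, while $\ipb{\w}{\w} = 6$; hence $\ipb{\alpha-\beta}{\alpha-\beta} \equiv 6t \pmod 8$. Comparing the two computations: if $t = 1$ then $\ipb{\alpha-\beta}{\alpha-\beta} \equiv 6 \pmod 8$, which is incompatible with lying in $\set{0,4,8}$; so $t = 0$ and $\ipb{\alpha-\beta}{\alpha-\beta} \equiv 0 \pmod 8$, forcing it to equal $0$ or $8$, i.e. $\ipb{\alpha}{\beta} = 2$ or $\ipb{\alpha}{\beta} = -2$. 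In either case equality holds in Cauchy--Schwarz, so $\beta$ is a scalar multiple of $\alpha$ of the same norm, giving $\beta = \alpha$ or $\beta = -\alpha$ respectively, which completes the proof.

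There is no deep obstacle here; the argument is a short parity calculation. The one point to be careful with is the modulo-$8$ bookkeeping, and in particular noticing that it is precisely the value $\ipb{\w}{\w} = 6$ (established in the proof of the earlier lemma) that rules out the case $t = 1$. It is also worth explicitly checking that all ingredients are in place: that lattice inner products are even (recorded right after the inner-product formula), that every element of $E_7$ has squared norm $2$ (the definition of $E_7$), and that the structural description of $\ker\kappa$ on $\mc{L}$ has already been proved.
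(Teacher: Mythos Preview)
Your proof is correct. Both you and the paper start from the structural description $\alpha-\beta = 2\delta + t\w$ and finish with Cauchy--Schwarz, but the middle steps differ. The paper rules out $t=1$ by a parity argument on the quadratic polynomial $r$ of equation~(\ref{eq:poly}): it checks that $r(\lc{\beta}) \equiv r(\lc{\alpha}) + r(\lc{\w}) \equiv r(\lc{\alpha}) + 1 \pmod 2$, so $\alpha$ and $\beta$ cannot both satisfy $r=1$. Having forced $t=0$, it then expands $\ipb{\beta}{\beta}=2$ to get $\ipb{\delta}{\alpha} = -\ipb{\delta}{\delta}$ and applies Cauchy--Schwarz to the pair $(\alpha,\delta)$. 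Your route is more unified: a single modulo-$8$ computation on $\ipb{\alpha-\beta}{\alpha-\beta}$ simultaneously eliminates $t=1$ and, when $t=0$, pins $\ipb{\alpha}{\beta}$ to $\pm 2$, after which Cauchy--Schwarz on $(\alpha,\beta)$ finishes directly. Your argument avoids introducing $r$ altogether and is arguably cleaner; the paper's version has the minor advantage that the $r$-parity step is self-contained and does not rely on the numerical value $\ipb{\w}{\w}=6$ from the preceding lemma.
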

\begin{proof}
$\kappa(\alpha) = \kappa(\beta)$ iff $\beta - \alpha = t \w + 2 \delta$ for some $\delta \in \mc{L}$, $t \in \{0,1\}$. We can rule out the case where $t = 1$ because, if it were
\[
	r(\lc{\beta}) = r(\lc{\alpha}) + 2r(\lc{\delta}) + r(\lc{w}) = r(\lc{\alpha}) + 1 \mod 2
\]
since $r(\lc{\w}) = 3$. Then, $\alpha$ and $\beta$ cannot both be roots due to equation (\ref{eq:poly}), which must necessarily hold modulo two. Therefore, $t = 0$ and $\beta = \alpha + 2 \delta$. Since $\alpha$ is a root, the condition that $\beta$ is also a root
\[
 	\ipb{\beta}{\beta} = \ipb{\alpha}{\alpha} + 4\ipb{\alpha}{\delta} + 4\ipb{\delta}{\delta} = 2,
\]
reduces to $\ipb{\delta}{\alpha} = - \ipb{\delta}{\delta}$. By the Cauchy-Schwarz inequality, $|\ipb{\alpha}{\delta}|^2 \leq \ipb{\alpha}{\alpha} \ipb{\delta}{\delta} = 2\ipb{\delta}{\delta}$, or equivalently, $|\ipb{\alpha}{\delta}| \leq 2$, with equality iff  $\delta$ is a scalar multiple of $\alpha$. Since inner products between lattice vectors are even integers, either $\delta = 0$ and $\beta = \alpha$, or $\delta = -\alpha$ and $\beta = -\alpha$.
\end{proof}

There are 126 roots in 63 antiparallel pairs. Let $R$ be a subset of $E_7$ with one root from each pair. We have just shown that both roots in a pair have the same image under $\kappa$, that these images are different for different pairs, and none are equal to 0.  Therefore, the restriction of $\kappa$ to the domain $R$ is a bijection between $R$ and $\mathbb{F}_2^6\setminus\{0\}$ whose inverse determines (by normalising the vectors in $R$) an orthonormal representation of $\spg{6}{\FF_2}$ thanks to the relationship (\ref{compat}).

\end{document}